\newcommand{\etal}{\emph{et al.}}
\newcommand{\eps}{\varepsilon}
\newcommand{\R}{\mathbb{R}}
\newcommand{\f}{Fr\'echet }
\newcommand{\MinimumQuery}{\text{\sc minimumQuery}}
\newcommand{\acro}[1]{\widetilde{#1}}
\newcommand{\opt}[1]{#1^*}
\newcommand{\acropt}[1]{\opt{\acro{#1}}}
\newcommand{\acrotrc}[1]{\overline{#1}}
\newtheorem{invariant}{Invariant}
\newtheorem{corollary}{Corollary}
\begin{document}

\title{Computing the \f Distance with a Retractable Leash}

\author{
Kevin Buchin\thanks{Department of Mathematics and Computer Science,
TU Eindhoven, The Netherlands. {\tt k.a.buchin@tue.nl},
{\tt r.v.leusden@student.tue.nl}, {\tt w.meulemans@tue.nl}.}
\and
Maike Buchin\thanks{Fakult\"at f\"ur Mathematik, Ruhr
Universit\"at Bochum, Germany. {\tt Maike.Buchin@ruhr-uni-bochum.de}.}
\and
Rolf van Leusden\footnotemark[1]
\and
Wouter Meulemans\thanks{giCentre, City University London, United Kingdom.
{\tt wouter.meulemans@city.ac.uk}.  Supported by the
Netherlands Organisation for Scientific Research (NWO) under
project no.~639.022.707.}
\and
Wolfgang Mulzer\thanks{Institut f\"ur Informatik, Freie Universit\"at
Berlin, Germany. {\tt mulzer@inf.fu-berlin.de}. Supported in part
by DFG project MU/3501/1.}
}

\maketitle

\begin{abstract}
All known algorithms for the \f distance 
between curves proceed in two steps: first, 
they construct an efficient oracle for the 
decision version; second, they use this 
oracle to find the optimum from a finite 
set of critical values. We present a novel 
approach that avoids the detour through 
the decision version. This gives the first 
quadratic time algorithm for the \f distance 
between polygonal curves in $\R^d$ under 
polyhedral distance functions (e.g., $L_1$ 
and $L_\infty$). We also get a 
$(1+\eps)$-approximation of the \f 
distance under the Euclidean metric, in 
quadratic time for any fixed $\eps > 0$. For 
the exact Euclidean case, our framework 
currently yields an algorithm with running 
time $O(n^2 \log^2 n)$. However, we 
conjecture that it may eventually lead 
to a faster exact algorithm.
\end{abstract}

\section{Introduction}
Measuring the similarity of curves is a 
classic problem in computational geometry.
For example, it is used for map-matching 
tracking data~\cite{BrakatsoulasPSW05,WenkSP06} 
and moving objects analysis~\cite{BuchinBG10,BuchinBGLL11}.
In these applications, it is important 
to take the continuity of the curves into 
account. Therefore, the \emph{\f distance} 
and its variants are popular metrics to 
quantify (dis)similarity. The \f distance 
between two curves is obtained by taking 
a homeomorphism between the curves that 
minimizes the maximum pairwise distance.
It is commonly explained through the 
\emph{leash}-metaphor: a man walks on 
one curve, his dog walks on the other 
curve.  Man and dog are connected by 
a leash. Both can vary their speeds, but 
they may not walk backwards. The \f distance 
is the length of the shortest leash so that 
man and dog can walk from the beginning 
to the end of the respective curves.

\paragraph{Related work.}
The algorithmic study of the \f distance 
was initiated by Alt and Godau~\cite{AltGo95}.
They gave an algorithm to solve the decision 
version for polygonal curves in $O(n^2)$ 
time, and then used parametric search to 
find the optimum in $O(n^2 \log n)$ time, 
for two polygonal curves of complexity $n$.
The method by Alt and Godau is very general
and also applies to polyhedral distance functions.
To avoid the need for parametric search, 
several randomized algorithms have been 
proposed that are based on the decision 
algorithm combined with random sampling of 
critical values, one running in $O(n^2 \log^2 n)$ 
time~\cite{CookWenk10}, the other in 
$O(n^2 \log n)$ time~\cite{HarPeledRa11}.
Recently, Buchin~\etal~\cite{bbmm-fswd-12} 
showed how to solve the decision version in 
subquadratic time, resulting in a randomized 
algorithm for computing the \f distance in 
$O(n^2 \log^{1/2} n \log\log^{3/2} n)$ time.

In terms of the leash-metaphor, these algorithms 
simply give several leashes to the man and his 
dog to try if a walk is possible. By a clever 
choice of leash-lengths, one then finds 
the \f distance efficiently. Since no 
substantially subquadratic algorithm for 
the problem is known, several faster 
approximation algorithms have been 
proposed (e.g. \cite{AltKW01,DriemelHW10}).
However, these require various assumptions 
of the input curves; previous to our work, there was 
no approximation algorithm that for the 
general case runs faster than known exact 
algorithms. Recently, Bringmann~\cite{bringmann} 
showed that, unless the Strong Exponential 
Time Hypothesis (SETH) fails, no 
general-case $O(n^{2-\alpha})$ algorithm 
can exist to approximate the \f distance 
within a factor of $1.001$, for any $\alpha > 0$. 
The lower bound on the approximation factor was later improved 
to $1.399$, even for the one-dimensional 
discrete case~\cite{bringmannMu}.  Subsequent to
our work, Bringmann and Mulzer showed
that a very simple greedy algorithm
yields an approximation factor of $2^{O(n)}$
in linear time~\cite{bringmannMu}. 
This leaves 
us with a gap between the known algorithms 
and lower bounds for computing and 
approximating the \f distance.

\paragraph{Contribution.}
We present a novel framework for computing 
the \f distance, one that does not rely on 
the decision problem. Instead, we give the 
man a ``retractable leash'' that can be 
lengthened or shortened as required. To this 
end, we consider monotone paths on the 
\emph{distance terrain}, a generalization 
of the \emph{free space diagram} typically 
used for the decision problem. Similar 
concepts have been studied before, but 
without the monotonicity requirement (e.g., 
path planning with height restrictions on 
terrains~\cite{deberg1997} or the 
\emph{weak} \f distance~\cite{AltGo95}).

We present the core ideas for our approach 
in Section~\ref{sec:framework}. The framework 
provides a choice of the distance function 
$\delta$ that is used to measure the distance 
between points on the curves.  However, it 
requires an implementation of a certain data 
structure that depends on $\delta$.  We apply 
our framework
to polyhedral distances (Section~\ref{sec:polyhedral}), 
to show that under such metrics, the \f distance 
is computable in quadratic time. To the best of
our knowledge, there is no previous method
for this case that is faster than the classic 
Alt-Godau algorithm 
with running  time $O(mn \log n)$~\cite{AltGo95}.
Our polyhedral implementation can be used to obtain 
a $(1+\eps)$-approximation for the Euclidean 
case (Section~\ref{sec:approx}). This leads 
to an $O(mn (d + \log \frac{1}{\eps}))$-time 
algorithm, giving the first approximation 
algorithm that runs faster than known exact 
algorithms for the general case.
Moreover, as shown by Bringmann~\cite{bringmann}, 
our result is tight up to subpolynomial factors, assuming 
SETH. 
Finally, we apply 
our framework to the Euclidean distance 
(Section~\ref{sec:euclidean}), to show that 
using this approach, we can compute the \f 
distance in $O(mn (d + \log^2 m + \log^2 n))$ 
time for two $d$-dimensional curves of 
complexity $m$ and $n$.
We conclude with two open problems 
in Section~\ref{sec:conclusion}.

\section{Framework}
\label{sec:framework}

\subsection{Preliminaries}
\label{ssec:prelim}

\paragraph{Curves and distances.}
Consider a curve $P$ in a $d$-dimensional 
space.  We denote the vertices of $P$ by 
$p_0, \dots, p_m$; its complexity (number 
of edges) is $m$. We treat a curve as a 
piecewise-linear function $P \colon [0,m] 
\rightarrow \R^d$. That is, 
$P(i+\lambda) = (1 - \lambda) p_i + \lambda p_{i+1}$ 
holds for any integer 
$i \in \{0, \dots, m-1\}$ and $\lambda \in [0,1]$.
Similarly, we are given a curve 
$Q \colon [0,n] \rightarrow \R^d$ with 
complexity $n$; its vertices are denoted 
by $q_0, \dots, q_n$.

Let $\Psi$ be the set of all orientation-preserving 
homeomorphisms, i.e., continuous and 
nondecreasing functions 
$\psi \colon [0,m] \rightarrow [0,n]$ with 
$\psi(0) = 0$ and $\psi(m) = n$.
Then the \emph{\f distance} is defined as
\[
d_\text{F}(P,Q) = 
\inf_{\psi \in \Psi} \max_{t \in [0,m]}\big\{ \delta\big(P(t), Q(\psi(t))
\big) 
\Big\}.
\]
Here, $\delta$ may be any distance 
function on $\R^d$. Here, we shall 
consider polyhedral distance functions 
(Section~\ref{sec:polyhedral}) and the more
typical case of the Euclidean 
distance function 
(Section~\ref{sec:euclidean}). For our 
framework, we require that $\delta$ is 
\emph{convex}. That is, the locus of all 
points with distance at most one to the 
origin forms a convex set in $\R^d$.

\begin{figure}[b]
\centering
\includegraphics[scale=0.9]{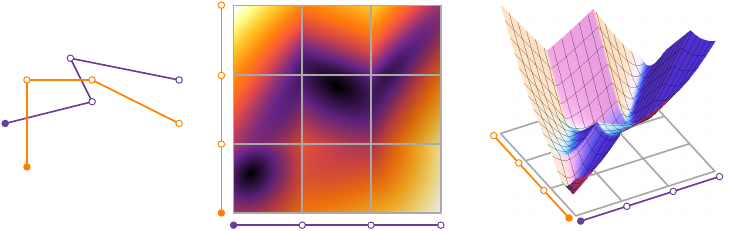}
\caption{Illustration of a distance terrain 
with the Euclidean distance in $\R^2$. Left: 
two curves. Middle: cells as seen from above. 
Dark colors indicate low ``height''. Right: 
perspective view.}
\label{fig:terrain}
\end{figure}

\paragraph{Distance terrain.}
Consider the joint parameter space 
$R = [0,m] \times [0,n]$ of $P$ and $Q$.
A pair $(s,t) \in R$ corresponds to the 
points $P(s)$ and $Q(t)$, and the distance 
function $\delta$ assigns a distance 
$\delta(P(s),Q(t))$ to $(s,t)$. We interpret 
this distance as the ``height'' at point 
$(s,t) \in R$. This gives a \emph{distance 
terrain} $T$, i.e., $T : R \rightarrow \R$ 
with $T(s,t) = \delta(P(s),Q(t))$. We 
partition $T$ into $mn$ \emph{cells} based 
on the vertices of $P$ and $Q$. For 
integers $i \in \{0, \dots, m-1\}$ and 
$j \in \{0, \dots, n-1\}$, the cell 
$C_{i,j}$ is defined as the subset 
$[i,i+1] \times [j,j+1]$ of the parameter 
space $R$. The cells form a regular grid, 
where $i$ represents the column and $j$ the 
row of a cell. The \emph{sides} of $C_{i,j}$ 
are the four line segments 
$[i,i+1] \times \{j\}$, $[i,i+1] \times \{j+1\}$, $\{i\} \times [j,j+1]$, 
and $\{i+1\} \times [j,j+1]$; the \emph{boundary} 
of $C_{i,j}$ is the union of its sides. An 
example of two curves and their distance 
terrain is given in Figure~\ref{fig:terrain}.

A path $\pi : [0,1] \rightarrow R$ is 
\emph{bimonotone} if it is both $x$- and 
$y$-monotone, i.e., every horizontal and 
vertical line intersects $\pi$ in at most 
one connected component. For $(s,t) \in R$, 
we let $\Pi(s,t)$ be the set of all 
bimonotone continuous paths from the origin 
to $(s,t)$. The \emph{acrophobia function} 
$\widetilde{T} : R \rightarrow \R$ is 
defined as
\[
  \acro{T}(s,t) = \inf_{\pi \in \Pi(s,t)} 
                  \max_{\lambda \in [0,1]} T(\pi(\lambda)).
\]
Intuitively, $\acro{T}(s,t)$ represents 
the lowest height that an acrophobic (and 
somewhat neurotic) climber needs to master 
in order to reach $(s,t)$ from the origin 
on a bimonotone path through the distance 
terrain $T$. A bimonotone path from $(0,0)$ 
to $(m,n)$ corresponds to a homeomorphism: we 
have $d_\text{F}(P,Q) = \acro{T}(m,n)$.

Let $x \in R$ and $\pi \in \Pi(x)$ be a 
bimonotone path from the origin to $x$. Let $\eps \geq 0$.
We call $\pi$ an \emph{$\eps$-witness} for 
$x$ if
\[
  \max_{\lambda \in [0,1]} T(\pi(\lambda)) \leq \eps.
\]
For $\eps = \acro{T}(x)$, we call $\pi$ simply 
a \emph{witness}: $\pi$ is then an optimal path 
for the acrophobic climber.

\paragraph{Algorithm strategy.}
Due to the convexity of the distance function, 
we need to consider only the boundaries of 
cells of the distance terrain. It seems natural 
to propagate through the terrain for any point 
on a cell side the minimal ``height'' (leash 
length) $\eps$ required to reach that point.
However, this may entail an amortized linear 
number of changes when moving from one cell 
to the next, giving a cubic-time lower bound 
for such an approach. We therefore do not 
maintain these functions explicitly. Instead, 
we maintain sufficient information to compute 
the lowest $\eps$ for a side. A single pass 
over the terrain then finds the minimum $\eps$ 
for reaching the other end, giving the \f 
distance. 

More specifically, we show that as
we move through a row $j$ of the distance terrain
from left to right, the witnesses for the minimum
values of the acrophobia function on the 
vertical boundaries exhibit a
certain \emph{monotonicity property}: if a witness
for the $i$-th vertical boundary enters row 
$j$ in column $a$, then there is a witness
for the $(i+1)$-th vertical boundary that
enters row $j$ in column $a$ or to the
right of column $a$. Thus, if we know that
the ``rightmost'' witness for the $i$-th
vertical boundary enters row $j$ in column $a$,
it suffices to consider only witnesses that
enter in columns $a, a+1, \dots, i+1$.
Furthermore, we can narrow down the set of 
candidate columns further by observing that
it is enough to restrict our attention to 
those columns for which
the minimum value of the acrophobia function
on the bottom boundary is smaller than for all bottom 
boundaries to the right of it, up to $i+1$ 
(otherwise, we could find an equally good witness further
to the right). Now, all we need is an efficient
way to decide whether for a given candidate column,
there actually exists an optimum witness for
the $(i+1)$-th vertical boundary that enters
row $j$ through this column. For this, we
describe \emph{witness envelopes}, a data structure
that allows us to characterize an optimum witness
that enters row $j$ in a given column. Furthermore,
we show that these witness envelopes can be maintained
efficiently, assuming that an appropriate data structure
for dynamic upper envelopes is available. Putting
everything together, and proceeding analogously for the columns
of the distance terrain,
we obtain a new algorithm for the \f distance.

\subsection{Analysis of the distance terrain}
\label{ssec:analysis}

The \f distance corresponds to the acrophobia 
function $\acro{T}$ on the distance terrain.
To compute $\acro{T}(m,n)$, we show that it 
suffices to consider the cell boundaries. For 
this, we generalize the fact that cells of 
the free space diagram are convex~\cite{AltGo95} 
to the distance terrain for convex distance 
functions.

\begin{lemma}\label{lem:convexthreshold}
Let $\eps \geq 0$, and suppose that $\delta$ 
is a convex distance function. For every cell 
$C$, the set of all points $(s,t) \in C$ with
$T(s,t) \leq \eps$ is convex.
\end{lemma}

\begin{proof}
The cell $C$ represents the parameter space 
of two line segments in $\R^d$. Let $\ell_P(s)$ 
and $\ell_Q(t)$ be the parameterized lines 
spanned by these line segments. Both $\ell_P$ 
and $\ell_Q$ are affine maps. Consider the 
map $f \colon \R^2 \rightarrow \R^d$ defined 
by $f(s,t) = \ell_P(s) - \ell_Q(t)$. Being 
a linear combination of affine maps, $f$ is 
affine. Set 
$D_\eps = \{ z \in \R^d \mid \delta(0,z) \leq \eps \}$.
Since $\delta$ is convex, $D_\eps$ is 
convex. Let $E = f^{-1}(D_\eps)$. Since the 
affine preimage of a convex set is convex, 
$E$ is convex. Thus, $C \cap E$, the subset 
$(s,t) \in C$ with $T(s,t) \leq \eps$, is 
convex, as it is the intersection of two 
convex sets.
\end{proof}

Lemma~\ref{lem:convexthreshold} has two 
important consequences. First, it shows that 
it is indeed enough to focus on cell 
boundaries. Second, it tells us that the 
distance terrain along each side is 
\emph{unimodal}, that is, it has a single 
local minimum.

\begin{corollary}\label{col:witnessEdge}
Let $C$ be a cell of the distance terrain, 
and $x_1$ and $x_2$ two points on different 
sides of $C$. For any $y$ on the line 
segment $x_1x_2$, we have 
$T(y) \leq \max\{T(x_1), T(x_2)\}$.
\end{corollary}

\begin{corollary}\label{col:convexboundary}
Let $C$ be a cell of the distance terrain.
The restriction of $T$ to any side of $C$ 
is unimodal.
\end{corollary}

We denote by $L_{i,j}$ and $B_{i,j}$ the left 
and bottom side of the cell $C_{i,j}$ (and, 
by slight abuse of notation, also the 
restriction of $T$ to the side). The right 
and top side are given by $L_{i+1,j}$ and 
$B_{i,j+1}$.\footnote{Note that there need 
not be an actual cell $C_{i+1,j}$ or 
$C_{i,j+1}$.} With $\acro{L}_{i,j}$ and 
$\acro{B}_{i,j}$ we denote the acrophobia 
function along the corresponding side. All 
these restricted functions depend on a 
single parameter $\alpha \in [0,1]$ in the 
natural way, i.e., $L_{i,j}(\alpha) = 
T(i, j + \alpha)$, $B_{i,j}(\alpha) = T(i + \alpha, j)$, etc.
Assuming that the distance function $\delta$ 
is symmetric, computing values for rows 
and columns of $T$ is symmetric as well.
Hence, we present only how to compute with 
rows. If $\delta$ is asymmetric, our methods 
still work, but some extra care needs to be 
taken when computing distances. In the 
following, we fix a row $j$, and we write 
$C_i$ as a shorthand for $C_{i,j}$, $L_i$ 
for $L_{i,j}$, etc.

Consider a vertical side $L_{i}$. We write 
$\acropt{L}_{i}$ for the minimum of the 
acrophobia function $\acro{L}_{i}$ along 
$L_{i}$, and similarly for horizontal sides.  
Our goal is to compute $\acropt{L}_{i}$ and 
$\acropt{B}_{i}$ for all cell boundaries.
We say that an $\eps$-witness $\pi$ 
\emph{passes through} a side $B_{i}$ if 
there is a $\lambda \in [0,1]$ with 
$\pi(\lambda) \in B_{i}$.

\begin{lemma}\label{lem:moveRighterWitness}
Let $\eps > 0$, and $x$ a point on $L_{i}$.
Let $\pi$ be an $\eps$-witness for $x$ 
that passes through $B_{a}$, for some 
$a \in \{0, \dots,i-1\}$. Suppose there is 
a column $b \in \{a+1, \dots, i-1\}$ with 
$\acropt{B}_{b} \leq \eps$. Then there 
exists an $\eps$-witness for $x$ that 
passes through $B_{b}$.
\end{lemma}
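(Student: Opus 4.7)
The plan is to splice the given $\eps$-witness $\pi$ together with a fresh $\eps$-witness that already reaches $B[b,j]$, using a straight shortcut inside cell $C[b,j]$ as the bridge. First I will analyze the structure of $\pi$ after it touches $B[a,j]$. Since $\pi$ is bimonotone, ends at $x\in L[i,j]$ with $t$-coordinate in $[j,j+1]$, and touches $B[a,j]$ somewhere, the suffix of $\pi$ past $B[a,j]$ is trapped in the horizontal strip of row $j$; its $s$-coordinate is continuous and nondecreasing from a value at most $a+1\leq b$ up to $i\geq b+1$. By the intermediate value theorem this suffix must cross the vertical edge $L[b+1,j]$ at some point $y'=(b+1,y'_t)$ with $y'_t\in[j,j+1]$, and $T(y')\leq\eps$ because $\pi$ is an $\eps$-witness.

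The hypothesis $\acropt{B}[b,j]\leq\eps$ supplies a point $q\in B[b,j]$ and an $\eps$-witness $\pi'$ from the origin to $q$; write $q=(q_s,j)$ with $q_s\in[b,b+1]$. I will then take the candidate witness $\pi^\star$ to be the concatenation of $\pi'$ (from the origin to $q$), the straight segment $\sigma$ from $q$ to $y'$, and the suffix of $\pi$ from $y'$ to $x$. Bimonotonicity across the joins is immediate: $q_s\leq b+1$ and $j\leq y'_t$, so both coordinates are nondecreasing along $\sigma$, and the two flanking pieces are bimonotone by construction. For the height bound, the two flanking pieces stay at height $\leq\eps$ by hypothesis, and $\sigma$ lies inside the cell $C[b,j]$ with endpoints on two distinct boundaries of that cell; Corollary~\ref{col:witnessEdge} therefore yields $T\leq\max\{T(q),T(y')\}\leq\eps$ everywhere along $\sigma$. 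Since $\pi^\star$ reaches $x$, stays at height $\leq\eps$, and passes through $q\in B[b,j]$, it is the desired witness.

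The only delicate point is the monotonicity of the bridging segment $\sigma$: this works precisely because $C[b,j]$ has $B[b,j]$ as its bottom edge and $L[b+1,j]$ as its right edge, so entering from below and exiting to the right automatically keeps both coordinates nondecreasing. This is where the hypothesis $b<i$ is used, since it guarantees that $\pi$ still has to cross $L[b+1,j]$ after leaving $B[a,j]$. The handful of degenerate configurations ($i=b+1$, so $y'=x$ and the $\pi$-suffix is trivial; $q_s=b+1$, so $\sigma$ collapses to a point; or $\pi$ meeting $B[a,j]$ at a single point) collapse harmlessly into the same construction and require no separate treatment.
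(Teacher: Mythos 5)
Your proof is correct and follows essentially the same route as the paper: take the optimal point $y$ on $B[b,j]$ together with its witness, observe that bimonotonicity forces $\pi$ to cross $L[b+1,j]$, and splice in a straight segment across $C[b,j]$ whose height is controlled by Corollary~\ref{col:witnessEdge}. You additionally spell out the bimonotonicity of the bridging segment and the degenerate cases, which the paper leaves implicit.
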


\begin{proof}
Let $y$ be a point on $B_{b}$ that achieves 
$\acropt{B}_{b}$, and $\pi_y$ a witness for 
$y$. Since $\pi$ is bimonotone and passes 
through $B_{a}$, it must also pass through 
$L_{b+1}$. Let $z$ be the (lowest) 
intersection point of $\pi$ and $L_{b+1}$, 
and $\pi_z$ the subpath of $\pi$ from $z$ 
to $x$. Let $\pi'$ be the path obtained by 
concatenating $\pi_y$, the line segment 
$yz$, and $\pi_z$. By our assumption on 
$\eps$ and by Corollary~\ref{col:witnessEdge}, 
path $\pi'$ is an $\eps$-witness for $x$ 
that passes through $B_{b}$; see 
Figure~\ref{fig:rightmost}.
\end{proof}

\begin{figure}[ht]
\centering
\includegraphics{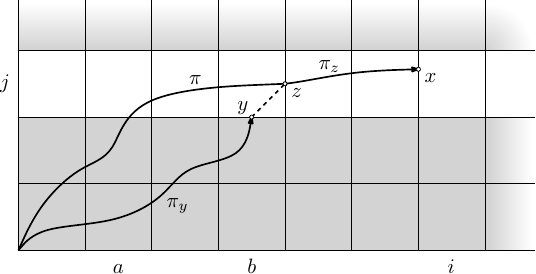}
\caption{Suppose $x \in L_{i}$ has an 
$\eps$-witness that passes through $B_{a}$, 
and $\acropt{B}_{b} \leq \eps$ for some 
$a < b < i$. Then, $x$ has an $\eps$-witness 
that passes through $B_{b}$.}
\label{fig:rightmost}
\end{figure}

Lemma~\ref{lem:moveRighterWitness} implies 
that any point $x \in L_{i}$ has a 
\emph{rightmost} witness $\pi$ with the 
property that if $\pi$ passes through the 
bottom side $B_{a}$, for some $a < i$, then 
the acrophobia function on all later bottom 
sides is strictly greater than the acrophobia 
optimum at $x$.

\begin{corollary}\label{col:rightmost}
Let $x$ be a point on $L_{i}$. There is a
witness $\pi$ for $x$ with the following 
property: if $\pi$ passes through the bottom 
side $B_{a}$, then $\acropt{B}_{b} > \acro{T}(x)$, 
for all $b \in \{ a+1, \dots, i-1 \}$.
\end{corollary}

Next, we argue that there is a witness for 
$\acropt{L}_{i+1}$ that enters row $j$ at 
or after the bottom side used by the 
witness for $\acropt{L}_{i}$. That is, 
the rightmost witnesses behave 
``monotonically'' in the terrain.

\begin{lemma}\label{lem:lowestOptCP}
Let $\pi$ be a witness for $\acropt{L}_{i}$ 
that passes through $B_{a}$, for some 
$a \in \{0, \dots, i-1\}$. Then 
$\acropt{L}_{i+1}$ has a witness that 
passes through $B_{b}$, for some 
$b \in \{a, \dots, i\}$.
\end{lemma}

\begin{proof}
Choose $b$ maximum so that $\acropt{L}_{i+1}$ 
has a witness $\pi'$ that passes through 
$B_{b}$. If $b \geq a$, we are done, so 
assume $b < a$. Since $\pi'$ must pass 
through $L_{i}$, we get 
$\acropt{L}_{i+1} \geq \acropt{L}_{i} \geq \acropt{B}_{a}$.
Lemma~\ref{lem:moveRighterWitness} now 
gives a witness for $\acropt{L}_{i+1}$ that 
passes through $B_{a}$, despite the choice 
of $b$.
\end{proof}

We now characterize $\acro{L}_{i}$ through 
a \emph{witness envelope}. Fix 
$i \in \{1, \dots, m\}$. Suppose 
$\acropt{L}_{i-1}$ has a witness that 
passes through $B_{a'}$. Fix a second 
column $a \in \{a', \dots, i-1\}$. We 
are interested in the best witness for 
$L_{i}$ that passes through $B_{a}$.
The witness envelope is a function 
$\mathcal{E}_{a,i} \colon [0,1] \rightarrow \R$.
The witnesses must pass through $B_{a}$ 
and $L_{i-1}$ (if $a < i - 1$), and 
they end on $L_{i}$. Hence,
\[
  \mathcal{E}_{a,i}(\lambda) \geq 
  \max \{ \acropt{B}_{a}, \acropt{L}_{i-1}, L_{i}(\lambda) \}.
\]
However, this is not enough to exactly 
characterize the best witnesses for $L_{i}$ 
through $B_{a}$. To this end, we introduce 
\emph{truncated terrain functions} 
$\acrotrc{L}_{b}(\lambda) = \min_{\mu \in [0,\lambda]} L_{b}(\mu)$, 
for $b \in \{a +1, \dots, i-1\}$. Since 
$L_{b}$ is unimodal, $\acrotrc{L}_{b}$ 
represents the decreasing part until 
the minimum, remaining constant afterwards.
Therefore,
\[
\mathcal{E}_{a,i}(\lambda) \geq \acrotrc{L}_{b}(\lambda),
\]
for all $b = a+1, \dots, i-1$. The reason 
for truncating the function is as follows: 
to reach $L_{i}(\lambda)$, we must cross 
all $L_{b}$ below $y$-coordinate $j + \lambda$.
If we pass $L_{b}$ below the position where 
the minimum is attained, the height $L_{b}$ 
may force a higher value for the acrophobia 
function. However, the increasing part of 
$L_{b}$ does not matter, because we could 
just pass $L_{b}$ closer to the minimum.
This intuition is not quite accurate, since
we need to account for the order of the 
increasing parts to ensure bimonotonicity.
However, we prove below that due to the witness 
for $\acropt{L}_{i-1}$ through $B_{a'}$, this 
is not a problem. Thus, the witness envelope 
for the column interval $\{a,\dots, i\}$ in 
row $j$ is the upper envelope of the 
following functions on the interval $[0,1]$:
\renewcommand{\labelenumi}{(\roman{enumi})}
\begin{enumerate}
\item the terrain function $L_{i}(\lambda)$;
\item the constant function $\acropt{B}_{a}$;
\item the constant function $\acropt{L}_{i-1}$, 
  if $a \leq i - 2$; and
\item the truncated terrain functions 
  $\acrotrc{L}_{b}(\lambda)$, for all 
  $b = a+1, \dots, i-1$.
\end{enumerate}
\renewcommand{\labelenumi}{\arabic{enumi}.}

\begin{figure}[t]
\centering
\includegraphics{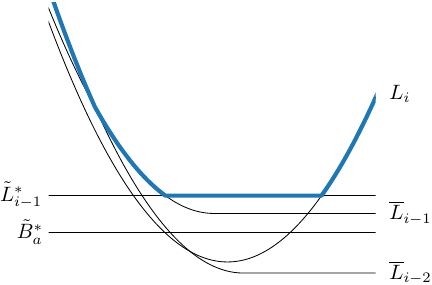}
\caption{A witness envelope for $a = i - 3$. 
It is the upper envelope of two constant 
functions, one (untruncated) terrain function, 
and two truncated terrain functions.}
\label{fig:witnessenvelope}
\end{figure}

See Figure~\ref{fig:witnessenvelope} for an 
example.
We prove with the following lemma that the 
witness envelope exactly characterizes $\acro{L}_{i}$ 
for witnesses that pass through $B_{a}$.

\begin{lemma}\label{lem:witnessenvelope}
Fix a row $j$ and a two columns $a'$, $i$ 
with $a' \leq i-1$. Suppose that 
$\acropt{L}_{i-1}$ has a witness $\pi_{i-1}$ 
that passes through $B_{a'}$. Let $a \in 
\{a', \dots, i-1\}$, $\alpha \in [0,1]$, 
and $\eps > 0$. The point $x = (i, j + \alpha)$ 
has an $\eps$-witness that passes through 
$B_{a}$ if and only if 
$\eps \geq \mathcal{E}_{a,i}(\alpha)$.
\end{lemma}

\begin{proof}
Let $\pi$ be an $\eps$-witness for $x$ that 
passes through $B_{a}$. Then, 
$\eps \geq \acropt{B}_{a}$ and 
$\eps \geq L_{i}(\alpha)$. If $a \leq i-2$, 
then $\pi$ must pass through $L_{i-1}$, so
$\eps \geq \acropt{L}_{i-1}$. Since $\pi$ 
is bimonotone, it has to pass through 
$L_{b}$ for $a < b < i$. Let $y_1 = (a+1, j + \alpha_1), 
y_2 = (a+2, j + \alpha_2), \dots, y_k = (a+k, j + \alpha_k)$ 
be the points of intersection, from left 
to right.  Then, 
$\alpha_1 \leq \alpha_2 \leq \dots \leq \alpha_k \leq \alpha$
and
$\eps \geq T(y_l) = L_{a+l}(\alpha_i) \geq \acrotrc{L}_{a+l}(\alpha)$,
for all $l = 1, \dots, k$.  Hence, 
$\eps \geq \mathcal{E}_{a,i}(\alpha)$.

\begin{figure}[t]
  \centering
  \includegraphics{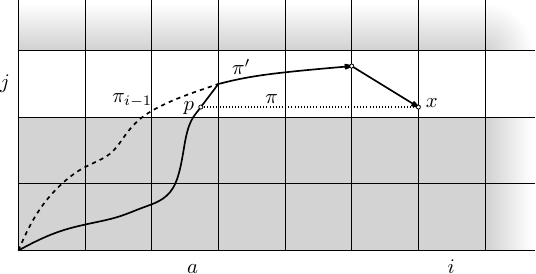}
  \caption{To construct $\pi'$, we combine the
  witness for $\acropt{B}_{a}$, the witness 
  $\pi_{i-1}$, and the segment from $\acropt{L}_{i-1}$ 
  to $x$. By assumption, $\pi_{i-1}$ enters row $j$ at
  or to the left of $B_{a}$. If $\alpha < \alpha'$, 
  then $\pi'$ is not bimonotone, and we shortcut with 
  segment $px$ (dotted) to obtain $\pi$.}
  \label{fig:witnessproof}
\end{figure}

Now suppose that $\eps \geq \mathcal{E}_{a,i}(\alpha)$.
The conclusion is immediate for $a = i-1$.
Otherwise, we have $\eps \geq \acropt{L}_{i-1}$.
Let $\alpha'$ be such that the witness $\pi_{i-1}$ 
for $\acropt{L}_{i-1}$ reaches $L_{i-1}$ at point 
$(i-1, j+\alpha')$. There are two cases. First, 
if $\alpha \geq \alpha'$, we can find an appropriate
$\eps$-witness $\pi'$ for $x$ by following the witness 
for $\acropt{B}_{a}$, passing to $\pi_{i-1}$, 
following $\pi_{i-1}$ to $\acropt{L}_{i-1}$, and 
then taking the line segment to $x$. Second, if 
$\alpha < \alpha'$, we construct a curve $\pi'$ as 
before. However, $\pi'$ is not bimonotone (the last 
line segment goes down). This is fixed as follows: 
let $p$ and $x$ be the two intersection points of 
$\pi'$ with the horizontal line $y = j + \alpha$. 
We shortcut $\pi'$ at the line segment $px$ as 
illustrated in Figure~\ref{fig:witnessproof}.
The resulting curve $\pi$ is bimonotone and passes 
through $B_{a}$. To see that $\pi$ is an $\eps$-witness, 
it suffices to check that along the segment $px$, 
the distance terrain never goes above $\eps$.
For this, we need to consider only the intersections 
of $px$ with the vertical sides. Let $L_{b}$ be such 
a side. The function $L_{b}$ is unimodal; let 
$\opt{\alpha}$ be the value where the minimum of 
$L_{b}$ is obtained. We distinguish two cases to 
argue that $\eps \geq L_{b}(\alpha)$ and to prove 
the lemma:
\begin{enumerate}
  \item $\alpha \leq \opt{\alpha}$: by definition 
  of truncated terrain functions, 
  $L_{b}(\beta) = \acrotrc{L}_{b}(\beta)$, for 
  all $\beta \in [0, \opt{\alpha}]$. Hence, we 
  know that $\eps \geq L_{b}(\alpha)$ holds trivially 
  by our assumption of $\eps \geq \mathcal{E}_{a,i}(\alpha)$ 
  and the fact that $\acrotrc{L}_{b}$ is part of 
  the witness envelope. 
  \item $\alpha \geq \opt{\alpha}$: by construction, 
  the witness $\pi_{i-1}$ passes $L_{b}$ at $\alpha$ 
  or higher. Hence, $\acropt{L}_{i-1} \geq L_{b}(\alpha)$ 
  holds as $L_{b}(\alpha)$ is on the increasing 
  part of $L_{b}$. It follows that 
  $\max\{\acrotrc{L}_{b}(\alpha), \acropt{L}_{i-1}\} \geq L_{b}(\alpha)$.
  Since $\eps \geq 
  \mathcal{E}_{a,i}(\alpha) \geq 
  \max\{\acrotrc{L}_{b}(\alpha), \acropt{L}_{i-1}\}$, 
  we have $\eps \geq L_{b}(\alpha)$, as desired.
\end{enumerate}
Thus, $\pi$ passes through $B_{a}$ and is an 
$\eps$-witness for $x$.
\end{proof}

\subsection{Algorithm}
\label{ssec:algorithm}

We are now ready to present the algorithm.
We walk through the distance terrain, row 
by row, in each row from left to right.
When processing a cell $C_{i,j}$, we compute 
$\acropt{L}_{i+1,j}$ and $\acropt{B}_{i,j+1}$. 
For each row $j$, we maintain a double-ended 
queue (deque) $Q_j$ that stores a sequence 
of column indices. We also store a data 
structure $U_j$ that contains a set of 
(truncated) terrain functions on the vertical 
sides in row $j$. The structure $U_j$ supports insertion, 
deletion, and a minimum-point query that 
returns the lowest point on the upper envelope 
of the terrain functions. In other words, $U_j$ 
implicitly represents a witness envelope, apart 
from the constant functions $\acropt{B}_{a}$ 
and $\acropt{L}_{i-1}$. The implementation of 
$U_j$ depends on the distance function 
$\delta$: in Section~\ref{sec:polyhedral}, we 
describe the data structure for polyhedral
distance functions, and in Section~\ref{sec:euclidean},
we consider the Euclidean case.

\begin{algorithm}[b]
  \caption{$\textsc{FrechetDistance}(P, Q, \delta)$}
  \label{alg:frechetbasic}
  \begin{algorithmic}[1]

  \REQUIRE $P$ and $Q$ are polygonal curves 
    with $m$ and $n$ edges in $\R^d$;\\
    $\delta$ is a convex distance function in $\R^d$
  \ENSURE \f distance $d_\text{F}(P,Q)$ for $\delta$

  \smallskip

  \COMMENT {We show computations only within a row, 
  column computations are analogous}

  \STATE $\acropt{L}_{0,0} \gets \delta(P(0), Q(0))$
  \STATE $\acropt{L}_{0,j} \gets \infty$ for all $j = 1, \dots, n-1$
  \STATE For each row $j$, create empty deque $Q_j$ and upper envelope
    structure $U_j$

  \FOR{$j \gets 0$ \TO $n-1$; $i \gets 0$ \TO $m-1$}

     \STATE Remove all values $x$ from $Q_j$ with 
     $\acropt{B}_{x,j} \geq \acropt{B}_{i,j}$ and 
     append $i$ to $Q_j$ \label{line:clearqueue}
     \IF {$|Q_j| = 1$}
        \STATE Clear $U_j$
     \ENDIF
     \STATE Add $L_{i+1,j}$ to $U_j$ \label{line:insertenvelope}

     \STATE Let $h$ and $h'$ be the first and second element in $Q_j$
     \STATE $(\alpha, \eps_\alpha) \gets 
     U_j.\MinimumQuery()$ \label{line:query1}
     \STATE $\eps_\alpha \gets 
     \max\{\eps_\alpha, \acropt{L}_{i,j}, \acropt{B}_{h,j} \}$
     \WHILE {$|Q_j| \geq 2$ \AND $\acropt{B}_{h',j} \leq \eps_\alpha$}
       \label{line:while}
        \STATE Remove all $L_{x,j}$ from $U_j$ with 
       	$x \leq h'$ \label{line:deleteenvelope}
        \STATE Remove the head $h$ from $Q_j$
        \STATE Let $h$ and $h'$ be the first and second element in $Q_j$
        \STATE $(\alpha, \eps_\alpha) \gets 
	U_j.\MinimumQuery()$ \label{line:query2}
        \STATE $\eps_\alpha \gets 
	\max\{\eps_\alpha, \acropt{L}_{i,j}, \acropt{B}_{h,j} \}$
     \ENDWHILE
     \STATE $\acropt{L}_{i+1,j} \gets \eps_\alpha$
     \STATE Update $L_{i+1,j}$ to $\acrotrc{L}_{i+1,j}$ in $U_j$
     \label{line:trunc}
  \ENDFOR
  \RETURN $\max\{ \delta(P(m), Q(n)), 
    \min \{ \acropt{L}_{m-1,n-1}, \acropt{B}_{m-1,n-1} \} \}$
  \end{algorithmic}
\end{algorithm}

The algorithm is given in Algorithm~\ref{alg:frechetbasic}.
It proceeds as follows: since all witnesses 
start at $(0,0)$, we initialize $C_{0,0}$ to 
use $(0,0)$ as its lowest point and compute 
the distance accordingly. The left- and 
bottommost sides of the distance terrain 
are considered unreachable.

In the body of the for-loop, we compute 
$\acropt{L}_{i+1,j}$ and $\acropt{B}_{i,j+1}$.
Let us describe how to find $\acropt{L}_{i+1,j}$.
First, we remove all indices from the back of 
the $Q_j$ that have an acrophobia optimum on 
the bottom side that is at least $\acropt{B}_{i,j}$,
and we append $i$ to $Q_j$. We also add 
$L_{i+1,j}$ to the upper envelope $U_j$. Let 
$h$ and $h'$ be the first two elements of $Q_j$.
We perform a minimum query on the witness envelope, 
combining the result with two constants 
$\acropt{L}_{i,j}$ and $\acropt{B}_{h,j}$, in order 
to find the smallest $\eps_\alpha$ for which a 
point on $L_{i+1,j}$ has an $\eps_\alpha$-witness 
that passes through $B_{h,j}$. Note that 
$\acropt{L}_{i,j}$ should be included as a 
constant only if $h < i$, i.e., if $|Q_j| \geq 2$; 
for simplicity, we omit this detail in the overview.
If $\eps_\alpha \geq \acropt{B}_{h', j}$, there is 
an $\eps_\alpha$-witness for $L_{i+1,j}$ through 
$B_{h', j}$, so we can repeat the process with 
$h'$ (after updating $U_j$). If $h'$ does not 
exist (i.e., $|Q_j| = 1$) or if 
$\eps_\alpha < \acropt{B}_{h', j}$, we stop and 
declare $\eps_\alpha$ to be optimal. Finally, we 
update $U_j$ to use the truncated terrain function 
$\acrotrc{L}_{i+1,j}$ instead of $L_{i+1,j}$.

We now give the invariant that holds at the 
beginning of each iteration of the for-loop.
The invariant is stated only for a row, analogous 
data structures and invariants apply to the columns.
A point $(\alpha, \beta) \in \R^2$ \emph{dominates} 
a point $(\gamma,\delta) \in \R^2$ if 
$\alpha > \gamma$ and $\beta \leq \delta$.
As before, we from now on fix a row $j$, and we omit 
the index $j$ from all variables.

\begin{invariant}\label{inv:algo}
At the beginning of iteration $i+1$ in row $j$, 
we have computed the optima $\acropt{L}_{1}$, 
$\acropt{L}_{2}$, $\dots$, $\acropt{L}_{i}$.
Let $a$ be the column such that a rightmost 
witness for $\acropt{L}_{i}$ passes through $B_{a}$.
Then $Q$ stores the first coordinates of the points 
in the sequence $(a, \acropt{B}_{a})$, 
$(a+1, \acropt{B}_{a+1})$, $\dots$, $(i-1, \acropt{B}_{i-1})$ 
that are not dominated by any other point in 
the sequence. In addition, $U$ stores the 
(truncated) terrain functions for the vertical 
sides in columns $a+1, \dots, i$.
\end{invariant}

Invariant~\ref{inv:algo} holds initially, so 
we need to prove that it is maintained in 
each iteration of the for-loop. This is done 
in the following lemma.

\begin{lemma}\label{lem:basiccorrect}
Algorithm~\ref{alg:frechetbasic} maintains 
Invariant~\ref{inv:algo}.
\end{lemma}

\begin{proof}
By the invariant, a rightmost witness for 
$\acropt{L}_{i}$ passes through $B_{h_0}$, 
where $h_0$ is the head of $Q$ at the 
beginning of the iteration. Let $h^*$ be 
the column such that a rightmost witness for
$\acropt{L}_{i+1}$ passes through $B_{h^*}$. 
Then $h^*$ is contained in $Q$ after $i$ 
has been added, because by Lemma~\ref{lem:lowestOptCP},
we have $h^* \in \{h_0, \dots, i\}$, and by 
Corollary~\ref{col:rightmost}, there can be 
no column index $a \in \{h^*+1, \dots, i\}$ 
that dominates $(h^*, \acropt{B}_{h^*})$.

Now let $h$ be the head of $Q$ before a minimum 
query on $U$, and $h'$ the second element of $Q$.
By Lemma~\ref{lem:witnessenvelope}, the
minimum query gives the smallest $\eps_\alpha$
for which there is an $\eps_\alpha$-witness for
$L_{i+1}$ that passes through $B_{h}$.
If $h < h^*$, then 
$\eps_\alpha \geq  \acropt{L}_{i+1}$ 
(definition of $\acropt{L}$);
$\acropt{L}_{i+1} \geq \acropt{B}_{h^*}$ 
(there is a witness through $B_{h^*}$);  
and $\acropt{B}_{h^*} \geq \acropt{B}_{h'}$ 
(the dominance relation ensures that the 
$\acropt{B}$-values for the indices in $Q$ 
are increasing). Thus, the while-loop in 
line~\ref{line:while} proceeds to the
next iteration. If $h = h^*$, then by 
Corollary~\ref{col:rightmost}, we have 
$\acropt{B}_{a} > \acropt{B}_{h^*}$ for all
$a \in \{h^* + 1, \dots, i\}$, and the 
while-loop terminates with the correct 
value for $\acropt{L}_{i}$. It is 
straightforward to check that 
Algorithm~\ref{alg:frechetbasic} maintains
the data structures $Q$ and $U$ according 
to the invariant.
\end{proof}

\begin{theorem}\label{thm:basicalg}
Let $\delta$ be a convex distance function 
in $\R^d$. Algorithm~\ref{alg:frechetbasic} 
computes $d_\text{F}(P,Q)$ for $\delta$ in time
$O(mn (T_\text{ue}(m,d,\delta) + T_\text{ue}(n,d,\delta)))$,
where $T_\text{ue}$ represents the 
time to insert into, delete from, and query 
the upper envelope data structure.
\end{theorem}

\begin{proof}
Correctness follows from Lemma~\ref{lem:basiccorrect}.
For the running time, observe that we insert 
each column index only once into $Q$ and 
each terrain function at most twice into $U$
(once untruncated, once truncated). Hence, 
we can remove elements at most once or 
twice. This results in an amortized running 
time of $O(1 + T_\text{ue}(n,d,\delta) + T_\text{ue}(m,d,\delta))$ 
for a single iteration of the for-loop. Since 
there are $O(mn)$ cells, this results in 
the claimed total execution time, assuming 
that $T_\text{ue}$ is $\Omega(1)$.
\end{proof}

\subsection{Avoiding Truncated Functions}

In Algorithm~\ref{alg:frechetbasic}, the envelope 
$U$ uses the (full) unimodal distance function 
only for $L_{i+1}$ and the truncated versions 
for the other cells. Since our algorithm relies
on an efficient data structure to maintain dynamic
upper envelopes of these distance functions, and
since it is easier to design such a data structure
if the set of possible functions to be 
stored is limited, we would like to avoid the
need for truncating the functions. In general,
this seems hard to do, but we show here
that as long as the functions behave 
like pseudolines (i.e., each pair of functions 
intersects at most once, and this intersection is proper), 
we can actually
work with the simpler set of untruncated distance
functions. Since we compare only 
functions in the same row (or column), functions 
in different rows or columns may still intersect 
more than once. Using the full unimodal functions 
potentially allows for a more efficient 
implementation of the envelope structure.

The idea is as follows: since the terrain 
distance functions on the cell boundaries
are unimodal, the initial (from left to right) envelopes
of the truncated distance functions and
the untruncated distance functions are
identical. The two envelopes begin to
differ only when the increasing part
of an untruncated distance function
``cuts off'' a part of the envelope.
We analyse our algorithm
to understand under which circumstances this 
situation can occur. It turns out
that in most cases, the increasing parts
of the distance functions are ``hidden''
by the inclusion of the constant
$\acropt{L}_{i}$ in the witness
envelope, except for one case,
namely when the deletion of a
distance function from the
witness envelope exposes 
an increasing part of a distance
function that did not previously appear
on the envelope. However, we will
see that this case can 
be detected easily, and that 
it can be handled by 
simply removing the increasing distance
function from the upper envelope.
The fact that the distance
functions behave like pseudolines
ensures that the removed function does
not play any role in later queries
to the witness envelope. This idea is formalized and proven below.

We modify Algorithm~\ref{alg:frechetbasic} as 
follows: we omit the update to $U$ in 
line~\ref{line:trunc}, thus $U$ maintains 
untruncated, unimodal functions. To perform 
a minimum-point query, we first run the query 
on the upper envelope of the full unimodal 
functions. Let $(\alpha, \eps_{\alpha})$ 
be the resulting minimum. If $(\alpha, \eps_{\alpha})$  
lies on the intersection of an increasing 
$L_{a}$ and a decreasing $L_{b}$ with $a < b$, 
we remove $L_{a}$ from $U$ and repeat the 
query. Otherwise, we return $\eps_{\alpha}$, 
which is then again combined with the 
constants $\acropt{L}_{i}$ and $\acropt{B}_{h}$ 
as usual.

Below, we prove that this modified algorithm 
is indeed correct. Let $U$ be the envelope 
maintained by the modified algorithm (with 
full functions), and $\overline{U}$ the 
envelope of the original algorithm (with 
truncated functions). We let both 
$\overline{U}$ and $U$ include the constants 
$\acropt{L}_{i}$ and $\acropt{B}_{h}$. The 
envelopes $U$ and $\overline{U}$ are 
unimodal: they consist of a decreasing part, 
(possibly) followed by an increasing part.
Let $D$ and $\overline{D}$ be the decreasing 
parts of $U$ and $\overline{U}$, up to the 
global minimum.

First, we make the following observation. 
With it, we prove that $D$ and $\overline{D}$ 
are identical throughout the algorithm 
(Invariant~\ref{inv:notrunc}).

\begin{lemma}\label{lem:lopt}
Fix a terrain function $L_{a}$. Let $i \geq a$ 
such that $L_{a}$ is contained in $\overline{U}$ 
at the end of iteration $i$.
Then 
$\acropt{L}_{i} \geq \acropt{L}_{a} \geq \min_\lambda L_{a}(\lambda)$.
\end{lemma}

\begin{proof}
By Invariant~\ref{inv:algo}, there is 
a witness for $\acropt{L}_{i}$ through $L_{a}$.
\end{proof}

\begin{invariant}\label{inv:notrunc}
Suppose we run the original and the modified 
algorithm simultaneously. Then, after each 
minimum query, $\overline{D}$ and $D$ are 
identical. Furthermore, any function that 
the modified algorithm deletes during a 
minimum query does not appear on 
$\overline{D}$ in any future iteration.
\end{invariant}

\begin{proof}
Initially, Invariant~\ref{inv:notrunc} trivially 
holds as the upper envelopes are empty. The 
envelopes $U$ and $\overline{U}$ are modified 
when:
\begin{enumerate}[(a)]
\item inserting a full unimodal terrain 
function (line~\ref{line:insertenvelope});
\item truncating a terrain function 
(line~\ref{line:trunc});
\item deleting a terrain function while 
updating the queue $Q$ 
(line~\ref{line:deleteenvelope}).
\end{enumerate}
We now prove that each case indeed maintains 
the invariant.

\textbf{Case (a)}:
The invariant tells us that $\overline{D}$ 
and $D$ are identical before adding a full 
unimodal terrain function, $L_{i+1}$. Hence, 
$L_{i+1}$ affects $\overline{D}$ and $D$ in 
the same manner (either by adding a piece 
or by shortening them) and 
Invariant~\ref{inv:notrunc} is maintained.

\textbf{Case (b)}:
The truncated part of $L_{i+1}$ is the 
increasing part and hence does not belong 
to $\overline{D}$. As the iteration ends, 
$i$ is increased by one, and $\acropt{L}_{i+1}$ 
is now included in the upper envelope 
rather than $\acropt{L}_{i}$. In the 
truncated envelope $\overline{U}$, the 
value of $\acropt{L}_{i+1}$ is determined by 
$\overline{D}$ and the increasing part of 
$L_{i+1}$. Hence, the minimum remains the 
same when truncating $L_{i+1}$, and 
$\overline{D}$ is unchanged. The modified 
algorithm skips the truncation step, so $D$ 
is not changed. Again, Invariant~\ref{inv:notrunc} 
is maintained.

\textbf{Case (c)}:
After deleting a function from $U$ and 
$\overline{U}$, Invariant~\ref{inv:notrunc} 
may get violated. Although the invariant 
guarantees that all functions on $\overline{D}$ 
are stored by the modified algorithm, it may 
happen that $D$ is cut off by the increasing 
part of a function that is truncated in 
$\overline{U}$. In this case, let the minimum 
$p = (p_x, p_y)$ of $D$ be the intersection of 
the increasing part of $L_{a}$ and the decreasing 
part of $L_{b}$ in iteration $i$. There are two 
subcases: (c1) $b < a < i$; or (c2) $a < b < i$.

Case (c1) cannot occur: during iteration $a-1$,
both the decreasing part of $L_{b}$ and the 
increasing part of $L_{a}$ are present in 
$\overline{U}$. Thus, $\acropt{L}_{a} \geq p_y$, 
and $\acropt{L}_{i} \geq p_y$, by 
Lemma~\ref{lem:lopt}. Therefore, $D$ cannot 
be a proper prefix of $\overline{D}$. In case 
(c2), the modified query algorithm deletes 
$L_{a}$ from $U$ and repeats. If we argue that 
$\acrotrc{L_a}$ does not occur on $\overline{D}$ 
in any future iteration, the algorithm 
eventually stops with $D$ and $\overline{D}$ 
identical, and with Invariant~\ref{inv:notrunc} 
maintained. For this, observe that (i) $a < b$ 
and the decreasing part of $L_{a}$ lies below 
$\acrotrc{L}_{b}$; and (ii) by Lemma~\ref{lem:lopt}, 
$\acropt{L}_{i} \geq \min_\lambda L_{a}(\lambda)$ 
for any iteration $i \geq a$ in which $L_{a}$ 
is contained in $U$. Thus, $\acrotrc{L}_{a}$ 
always lies below $\overline{D}$.
\end{proof}

Now that we have established the desired invariant, 
the following theorem can be stated as a direct 
consequence of it.

\begin{theorem} \label{thm:unimodalpseudolines}
Let $j$ be a row of the distance terrain such 
that the distance functions in row $j$ intersect 
pairwise at most once. Then the minima computed 
by the modified algorithm are identical to the 
minima computed by the original algorithm.
\end{theorem}

\section{Polyhedral distance}
\label{sec:polyhedral}

We consider the \f distance with a convex polyhedral 
distance function $\delta$, i.e., the ``unit sphere'' 
of $\delta$ is a convex polytope in $\R^d$ that 
strictly contains the origin. For instance, the $L_1$ 
and $L_\infty$ distance are polyhedral with the 
cross-polytope and the hypercube as respective unit 
spheres. Throughout, we assume that $\delta$ has 
\emph{complexity} $k$, i.e., its polytope (unit 
sphere) has $k$ facets. The polytope of $\delta$ is 
not required to be regular or symmetric, but as 
before, we simplify the presentation by assuming 
symmetry.

Intuitively, the distance $\delta(u,v)$ is the 
smallest scaling factor $s \geq 0$ such that $v$ 
lies on the polytope, centered on $u$ and scaled by 
a factor of $s$. We compute it as follows. Let 
$\mathcal{F}$ denote the facets of the polytope of 
$\delta$. Let $\delta_f(u,v)$ denote the \emph{facet 
distance} for facet $f \in \mathcal{F}$, that is, 
the multiplicative factor by which the hyperplane 
spanned by $f$ needs to be scaled from $u$ to 
contain $v$. We assume that a facet $f$ is defined 
through the point $p_f$ on the hyperplane spanned 
by $f$ that is closest to the origin: the vector 
from the origin to $p_f$ is normal to $f$. The 
distance $\delta_f(u,v)$ is then computed as 
$p_f \cdot (v - u) / \|p_f\|^2$. This distance 
may be negative, but there is always at least one 
facet with non-negative distance. Then 
$\delta(u,v) = \max_{f\in\mathcal{F}} \delta_f(u,v)$, 
the \emph{maximum} over all facet distances. For a 
general polytope, we can compute the facet distance 
in $O(d)$ and the distance between points in $O(k d)$ 
time. However, for specific polytopes, we may do 
better. To make this explicit in our analysis, we 
denote the time to compute the facet distance by 
$T_\text{facet}(\delta)$.

The distance terrain functions $L_{i,j}$ and 
$B_{i,j}$ are piecewise linear for a convex polyhedral 
distance function $\delta$. Each linear part 
corresponds to a facet of $\delta$. Therefore, 
it has at most $k$ parts. Moreover, for a fixed line 
segment (i.e., within the same row or column), each 
facet has a fixed slope: the parts for this facet 
are parallel. Depending on the polytope, the maximum 
number of parts of a single function may be less than 
$k$. We denote this actual maximum number of parts 
by $k'$. Computing the linear parts of a distance 
terrain function $L_{i,j}$ or $B_{i,j}$ requires 
computing which facets may occur. We denote the time 
it takes to compute the $k'$ relevant facets for a 
given boundary by $T_\text{part}(\delta)$.

We give three approaches.
First, we use an upper envelope structure as in 
the Euclidean case, but exploiting that the distance 
functions are now piecewise linear. Second, we use a 
brute-force approach which is more efficient for small 
to moderate dimension $d$ and complexity $k$. Third, 
we combine these methods to deal with the case of 
moderately sized $d$ and $k'$ being much smaller than $k$.

\paragraph{Upper envelope data structure.}
As $L_{i,j}$ and $B_{i,j}$ are piecewise linear, we 
need a data structure that dynamically maintains the 
upper envelope of lines under insertions, deletions, 
and minimal-point queries. Note that the minimal 
point query now requires us to compute the actual 
minimal point on the upper envelope of lines (instead
of parabolas). We apply the same duality transformation 
as in the Euclidean case and maintain a dynamic convex 
hull. That is, every line $\ell : y = a x + b$ on the 
upper envelope dualizes to a point $\ell^* = (a, -b)$.
Any point $p = (a,b)$ dualizes to a line $p^* : y = a x - b$.
If a point $p$ is above a line $\ell$, then the point $\ell^*$ 
is above the line $p^*$. Hence, the upper envelope 
corresponds to the dual lower convex hull. Since the 
minimum of the upper envelope occurs when the slopes 
change from negative to nonnegative, it dualizes to 
the line segment on the convex hull that intersects 
the $y$-axis. The fastest known data structure for 
this problem is due to Chan \cite{chan2012}: for $h$ 
lines, it has an $O(\log^{1+\tau} h)$ query and 
amortized update time, for any $\tau > 0$.

However, in our case, we can do slightly better by 
using the data structure by Brodal and 
Jacob~\cite{brodal2002}. This data structure does 
not support the minimal-point query directly. However, 
we can make it work by observing that we must insert 
and delete up to $k'$ linear functions each time; it 
is acceptable to run multiple queries as well.

\begin{lemma}\label{lem:brodal_jacob}
We can implement an upper envelope data structure 
structure on $h$ piecewise linear functions of complexity 
at most $k'$ with an amortized update time of 
$O(T_\text{part}(\delta) + k' T_\text{facet}(\delta) + k' \log (hk'))$ 
and a minimal-point query time of $O(k' \log (hk'))$.
\end{lemma}

\begin{proof}
First, we consider insertions and deletions.
Every function is piecewise linear with at most 
$k'$ parts, so there are at most $hk'$ lines in 
the data structure. Hence, it takes $O(k' \log hk')$ 
amortized time to insert and delete the parts of a 
single function. To compute the $k'$ relevant lines 
that make up the piecewise linear function, we first 
find the $k'$ relevant facets of $\delta$ in 
$O(T_\text{part}(\delta))$ time. Then we compute 
the parameters of the corresponding lines by computing 
for each relevant facet $f$ the distance between 
$P(i)$ and $Q(j)$ and $Q(j+1)$ with respect to $f$.
This takes $O(T_\text{facet}(\delta))$ time per facet.

For the minimal-point query, we observe that the 
lines with positive slope (that is, dual points 
with positive $x$-coordinate) are truncated at the 
end of each iteration. Hence, at any point during 
the algorithm, the dual lower hull contains at most 
$k'$ points with positive $x$-coordinate.
We maintain only the points with nonpositive $x$-coordinate 
(lines with negative slope) in the data structure.
To find the line segment that intersects the $y$-axis, 
we perform for each current point with positive 
$x$-coordinate a tangent query in the convex hull 
structure. We maintain the tangent with the lowest 
intersection with the $y$-axis: this tangent gives 
the intersection between the $y$-axis and the actual 
lower hull (including the points with positive 
$x$-coordinate). We perform $k'$ queries, each in 
$O(\log hk')$ time; a minimal-point query takes 
$O(k' \log hk')$ time.
\end{proof}

\paragraph{Brute-force approach.}
A very simple data structure can often lead to 
good results. Here, we describe such a data 
structure, exploiting that in a single row, the 
distance function for each facet has a fixed 
slope. Unlike the other approaches, this method 
does not require computing the $k'$ relevant facets 
and thus not depend on $T_\text{part}(\delta)$.

\begin{lemma}\label{lem:brute_force}
After $O((m+n) k (T_\text{facet}(\delta) + \log k))$ 
total preprocessing time,  we can implement 
the upper envelope structure with an amortized 
update and query time of $O(k T_\text{facet}(\delta))$.
\end{lemma}

\begin{proof}
During the preprocessing phase, we sort for each 
segment of $P$ and $Q$ the facets of $\delta$ 
by the corresponding slope on the witness envelope.
This takes $O((m+n) k (T_\text{facet}(\delta) + \log k))$ 
total time using the straightforward algorithm.

Consider the upper envelope data structure $U_j$ 
for a row $j$ (columns are again analogous).
Structure $U_j$ must represent a number of unimodal 
functions, each consisting of a number of linear 
parts. Each linear part corresponds to a certain 
facet of the polytope and has a fixed slope.
For each facet $f \in \{1, \dots, k\}$ (in sorted 
order), structure $U_j$ stores a doubly linked list 
$F_f$ containing lines spanned by these linear parts.
Given the fixed slope, lines in a single list 
$F_f$ do not intersect and are sorted from top to 
bottom. The upper envelope is fully determined 
only by top lines in each list $F_f$.

When processing a cell boundary $L_{i,j}$, we 
update each list $F_l$ in $U_j$: remove all lines 
below the line for $P(i)$ from the back of $F_l$, 
and append the line for $P(i)$. Per facet, it 
takes $O(T_\text{facet}(\delta))$ time to compute 
the $y$-intersection of the line and amortized 
$O(1)$ time for the insertion. We then go through 
the top lines in the $F_l$ in sorted order to 
determine the minimal value on the upper envelope 
in $O(k)$ time.
\end{proof}

\paragraph{A hybrid approach.}
We can combine the methods from 
Lemma~\ref{lem:brodal_jacob} and 
Lemma~\ref{lem:brute_force} into a hybrid approach.

\begin{lemma}\label{lem:hybrid}
After $O((m+n) k)$ total preprocessing time, we 
can implement the upper envelope structure with 
amortized update time 
$O(T_\text{part}(\delta) + k' T_\text{facet}(\delta) + k' \log k)$ 
and minimal-point query time $O(k' \log k)$.
\end{lemma}

\begin{proof}
For each row (or column), we initialize $k$ empty 
lists $F_l$, $l = 1, \dots, k$. This takes $O((m+n)k)$ 
total preprocessing time. The role of the $F_l$ is 
similar to Lemma~\ref{lem:brute_force}, i.e., each 
list $F_l$ corresponds to a facet of the polytope.
However, unlike Lemma~\ref{lem:brute_force}, we do 
not sort the facets. Instead, we maintain the upper 
envelope of the top lines in each $F_l$, using the 
method from Lemma~\ref{lem:brodal_jacob}. At each cell 
boundary, we find the $k'$ relevant parts and compute 
their parameters. The parts are inserted into the 
appropriate lists $F_l$. If a new part appears at 
the top of its list, we update the upper envelope 
structure. Since now this structure stores only 
$k$ lines, this takes amortized time 
$O(T_\text{part}(\delta) + k' T_\text{facet}(\delta) + k' \log k)$.

Minimal-point queries are done as before (see the proof 
of Lemma~\ref{lem:brodal_jacob}). Again, the structure 
contains only $k$ lines: a query takes $O(k' \log k)$ time.
\end{proof}

Plugging Lemmas~\ref{lem:brodal_jacob}, 
\ref{lem:brute_force}, and \ref{lem:hybrid} into 
Theorem~\ref{thm:basicalg} yields the following 
result. The method that works best depends on the 
chosen polytope and on the given complexity and 
dimensions, that is, on the relationship between $n$, 
$k$, $k'$ and $d$.

\begin{theorem}\label{thm:polyhedral}
Let $\delta$ be a convex polyhedral distance 
function of complexity $k$ in $\R^d$. 
Algorithm~\ref{alg:frechetbasic} computes 
the \f distance under $\delta$ in
\[ O\left( \min \left\{
\begin{array}{c}
mn (T_\text{part}(\delta) + k' T_\text{facet}(\delta) + 
  k' \log (m n k')), \\
(m+n) k \log k + mn k T_\text{facet}(\delta), \\
(m+n) k + mn (T_\text{part}(\delta) + 
k' T_\text{facet}(\delta) + k' \log k)
\end{array} \right\} \right)\]
time, where $T_\text{part}(\delta)$ is the time 
needed to find the relevant parts of a distance 
function and $T_\text{facet}(\delta)$ the time 
needed to compute the distance between two points 
for a given facet of $\delta$.
\end{theorem}

\begin{proof}
The first bound follows directly from 
Lemma~\ref{lem:brodal_jacob} and Theorem~\ref{thm:basicalg}.
For the second bound, use Lemma~\ref{lem:brute_force} 
and observe that $(m+n) k T_\text{facet}(\delta)$ 
is asymptotically smaller than $mn k T_\text{facet}(\delta)$.
For the last bound, use  Lemma~\ref{lem:hybrid}.
\end{proof}

For a generic polytope, we have 
$T_\text{facet}(\delta) = O(d)$, so the 
brute-force approach runs in $O(n k \log k + n^2 k d)$ 
time. The other methods can be faster only 
if $k' = o(k)$ and if we have an $o(k d)$-time 
method to compute the relevant facets for a 
distance terrain function. The hybrid method 
improves over the upper-envelope method if 
$k'$ is much smaller than $k$. Note that 
there cannot be more than 
$\min \{ k , n k' \}$ elements in the upper 
envelope for the hybrid method. However, if 
$k > n k'$, the upper-envelope method 
outperforms the hybrid method. Thus, to gain 
an advantage over the brute force method, 
a structured polytope is necessary.

\begin{corollary}{\label{col:polygeneric}}
Let $\delta$ be a convex polyhedral distance 
function of complexity $k$ in $\R^d$. 
Algorithm~\ref{alg:frechetbasic} computes 
the \f distance under $\delta$ in 
$O((m+n) k \log k + mn k d)$ time.
\end{corollary}

Let us now consider $L_\infty$. Its 
polytope is the hypercube; each facet 
is determined by a maximum coordinate. 
We have $k' \leq k = 2d$, and the 
brute-force method outperforms the 
other methods. However, a facet depends 
on only one dimension, so we compute 
the distance for a given facet in 
$T_\text{facet}(L_\infty) = O(1)$ time.

\begin{corollary}{\label{col:linf}}
Algorithm~\ref{alg:frechetbasic} computes 
the \f distance under the $L_\infty$ 
distance in $\R^d$ in $O((m+n) d \log d + m n d)$ 
time.
\end{corollary}

For $L_1$, the cross-polytope, there are 
$k = 2^d$ facets. Structural insights help 
us improve upon the brute-force method. 
The $2^d$ facets of the cross-polytope are 
determined by the signs of the coordinates.
Let $\ell = Q(j)Q(j+1)$ be the line segment
and $p = P(i)$ the point defining the 
terrain distance $L_{i,j}$. At the 
breakpoints between the parts of $L_{i,j}$, 
one of the coordinates of $\ell - p$ 
changes sign. Therefore, there are at 
most $k' = d+1$ parts. We find these parts 
efficiently by computing for each 
coordinate the point on $\ell - p$ where 
the coordinate becomes zero (if any).
Sorting these values gives a representation 
of the relevant facets in $O(d \log d)$ 
time. The actual facets can then by 
computed in $T_\text{part}(L_1) = O(d^2)$ 
time. Computing the facet distance takes 
$T_\text{facet}(L_1) = O(d)$ time, as for 
a general polytope. We conclude that the 
hybrid approach outperforms the 
brute-force approach. Whether the hybrid 
method outperforms the ``pure'' 
upper-envelope method depends on the 
dimension $d$.

\begin{corollary}{\label{col:l1}}
Algorithm~\ref{alg:frechetbasic} computes 
the \f distance under the $L_1$ distance in $\R^d$ in
\[
  O(\min \{ m n (d^2 + d\log (m n)), (m + n) 2^d + m n d^2 \})
\]
time.
\end{corollary}
\begin{proof}
From the arguments above and from 
Theorem~\ref{thm:polyhedral}, we know that 
the hybrid method runs in 
$O((m + n) 2^d + m n d^2)$ time. Similarly, 
the upper-envelope method runs in 
$O(m n (d^2 + d \log (m n d)))$ time.
Simplification of the 
latter gives $O(m n (d^2 + d\log (m n)))$.
\end{proof}

\section{Approximating the Euclidean distance}
\label{sec:approx}

We can use polyhedral distance functions to 
approximate the Euclidean distance. This 
allows us to obtain the following result.

\begin{corollary} \label{col:approxEuclid}
Algorithm~\ref{alg:frechetbasic} computes a 
$(1+\eps)$-approximation of the \f distance 
under the Euclidean distance in $\R^d$ in 
$O(m n (d + \eps^{-1/2}))$ time.
\end{corollary}

\begin{proof}
A line segment $\ell$ and a point $p$ span 
exactly one plane in $\R^d$ (unless they 
are collinear, in which case we pick an 
arbitrary plane). On this plane, the 
Euclidean unit sphere $O$ is a circle; the 
same circle for each plane. We approximate 
$O$ with a $k$-regular inscribed polygon 
$\overline{O}$ in $\R^2$. We need to orient 
this polygon consistently for all points 
$p$, e.g., by having one side parallel to 
$\ell$. Simple geometry shows that for 
$k = O(\eps^{-1/2})$, the polygon 
$\overline{O}$ is a $(1+\eps)$-approximation 
to $O$. The computation is two-dimensional, 
but we must find the appropriate 
transformations, which takes $O(d)$ time per 
boundary. We no longer need to sort the 
facets of the polytope for each edge; the 
order is given by $\overline{O}$. This 
saves a logarithmic factor for the 
initialization of the brute-force method.
This method performs best and, using 
Theorem~\ref{thm:polyhedral}, we get an 
execution time of 
$O(m n (d + \eps^{-1/2}) + (m+n) \eps^{-1/2} \log \eps^{-1/2})$.
However, for $\eps^{-1/2} \geq  \log^2 m + \log^2 n$, 
we simply compute the exact \f distance in 
$O(m n (d + \log^2 m + \log^2 n))$ time by Theorem~\ref{thm:euclalgo}.
\end{proof}

Though this paper focuses on avoiding the 
decision-and-search paradigm, we can do better 
if we are willing to invoke an algorithm for 
the decision version of the \f distance problem.
\begin{corollary}
We can calculate a $(1 + \eps)$-approximation 
of the \f distance under the Euclidean 
distance in $O(m n d + T_\text{dec}(n,d) \log \eps^{-1})$ 
time, where $T_\text{dec}(n,d)$ is the time 
needed to solve the decision problem for the 
\f distance.
\end{corollary}

\begin{proof}
Corollary~\ref{col:approxEuclid} gives a 
$\sqrt{2}$-approximation to the Euclidean 
distance in $O(m n d)$ time. Then, we go 
from a $\sqrt{2}$-approximation to a 
$(1+\eps)$-approximation by binary search, 
using the decision algorithm.
\end{proof}

Solving the decision version takes 
$T_\text{dec}(n,d) = O(m n d)$ time~\cite{AltGo95}.
For $d = 2$ and the right relation between 
$m$ and $n$, one can do slightly 
better~\cite{bbmm-fswd-12,thesisMeulemans-2014}:
on a pointer machine, we may solve the decision 
version in $O(m n (\log \log n)^{3/2} / \sqrt{\log n})$, 
assuming $m \leq n$ and $m = \Omega(\log^3 n)$;
using a word RAM, we may solve it in 
$O(m n (\log \log n)^{2} / \log n)$, 
assuming $m \leq n$ and $m = \Omega(\log^6 n)$.

\section{Euclidean distance}
\label{sec:euclidean}

Let us now consider our framework under the 
Euclidean distance $\delta_\text{E}$. The 
framework applies, because $\delta_\text{E}$ 
is convex (and symmetric). In fact, we use 
the squared Euclidean distance 
$\delta_\text{E}^2 = \delta_\text{E}(x,y)^2$.
Since squaring is a monotone function on 
$\R^+_0$, computing the \f distance for the 
squared Euclidean distance is equivalent 
to the Euclidean case: if 
$\eps = d_\text{F}(P,Q)$ for 
$\delta_\text{E}^2$, then 
$\sqrt{\eps} = d_\text{F}(P,Q)$ for 
$\delta_\text{E}$. We show that the terrain 
functions for $\delta_\text{E}^2$ in each 
row and column behave like pseudolines. We 
consider only the vertical sides; 
horizontal sides are analogous.

\begin{lemma}\label{lem:sqrEucl}
For $\delta = \delta_\text{E}^2$, each 
distance terrain function $L_{i,j}$ is part 
of a parabola. Any two functions $L_{i,j}$ 
and $L_{i',j}$ intersect at most once.
\end{lemma}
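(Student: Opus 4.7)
The plan is to write $L[i,j]$ explicitly as a function of the boundary parameter $\lambda \in [0,1]$ and read off its structure. By definition, $L[i,j](\lambda) = \delta_\text{E}^2(P(i), Q(j+\lambda))$, and since $Q$ is piecewise linear, I can set $v_j = Q(j+1) - Q(j)$ and $p_{i,j} = P(i) - Q(j)$, so that $P(i) - Q(j+\lambda) = p_{i,j} - \lambda v_j$. Expanding the squared norm gives
\[
L[i,j](\lambda) \;=\; \|v_j\|^2\, \lambda^2 \;-\; 2\langle p_{i,j}, v_j\rangle\, \lambda \;+\; \|p_{i,j}\|^2,
\]
which is a (possibly degenerate) upward parabola in $\lambda$. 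This handles the first claim.

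For the second claim, the crucial observation is that the quadratic coefficient $\|v_j\|^2$ depends \emph{only on the row index $j$}, not on the column $i$. Hence, for any two boundaries $L[i,j]$ and $L[i',j]$ in the same row, the difference
\[
L[i,j](\lambda) - L[i',j](\lambda) \;=\; -2\langle p_{i,j} - p_{i',j},\, v_j\rangle\, \lambda \;+\; \bigl(\|p_{i,j}\|^2 - \|p_{i',j}\|^2\bigr)
\]
is an affine function of $\lambda$. An affine function vanishes at most once (unless it is identically zero, in which case the two parabolas coincide and we can break the tie arbitrarily), so the two distance functions intersect at most once on $[0,1]$.

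There is no real obstacle here; the whole argument is a short algebraic computation. The only subtlety worth flagging is the role of symmetry: the pseudoline property is claimed only within a single row (or, by the symmetric argument, within a single column), because two functions from different rows have different quadratic coefficients $\|v_j\|^2 \neq \|v_{j'}\|^2$ in general and could cross twice. This is exactly the setting in which the algorithm applies the result (comparisons inside $U_j$ are always within the same row), so the lemma as stated suffices for invoking Lemma~\ref{lem:unimodalpseudolines} in the Euclidean setting.
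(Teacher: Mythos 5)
Your proposal is correct and takes essentially the same approach as the paper: both express $L[i,j]$ as an explicit quadratic in $\lambda$ and observe that the leading coefficient $\|v_j\|^2$ (the paper's $|\ell|^2$) depends only on the row, so two functions in the same row differ by an affine function and hence cross at most once. The only cosmetic difference is that you expand $\|p_{i,j}-\lambda v_j\|^2$ directly, whereas the paper first projects $p$ onto the line and applies the Pythagorean decomposition to reach the same quadratic.
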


\begin{proof}
The function $L_{i,j}$ represents the 
squared Euclidean distance between the 
point $p = P(i)$ and the line segment 
$\ell = Q(j)Q(j+1)$. Let $\ell'$ be 
the line though $\ell$, uniformly 
parameterized by $\lambda \in \R$, 
i.e., 
$\ell'(\lambda) = \lambda(Q(j+1)- Q(j)) + Q(j)$.
Let $\lambda_p$ be the $\lambda$ for 
which $\ell'(\lambda)$ is closest to 
$p$. By the Pythagorean theorem, 
$L_{i,j}(\lambda) = 
\| \ell'(\lambda) - \ell'(\lambda_p) \|^2 + \|\ell'(\lambda_p) - p\|^2 $.
By the parametrization of $\ell'$, we 
have
\[
    \|\ell'(\lambda) - \ell'(\lambda_p)\|^2 =
    \|\ell\|^2 (\lambda - \lambda_p)^2 =
    \|\ell\|^2 \lambda^2 - 2 \|\ell\|^2 \lambda_p \lambda + 
      \|\ell\|^2 \lambda_p^2.
\]
Hence, $L_{i,j}$ is a parabolic function 
in $\lambda$, where the quadratic term 
depends only on $\ell$. For two functions 
in the same row, this term is the same, 
and thus the parabolas intersect at most 
once.
\end{proof}

By Theorem~\ref{thm:unimodalpseudolines} and 
the above lemma, we can use the modified 
algorithm to maintain $U_j$ with the full 
parabolas rather than truncated ones. The 
parabolas of a single row share the same 
quadratic term, so we can treat them as 
lines by subtracting $\|\ell\|^2 \lambda^2$.
In this transformed space, the constant 
functions $\acropt{L}_{i,j}$ and $\acropt{B}_{h,j}$ 
are now downward parabolas. This causes 
no problems, as these are needed only 
\emph{after} computing the minimum on the 
upper envelope: we can add the term 
$\|\ell\|^2 \lambda^2$ back to the answer 
before these constant functions are needed.

However, the minimum of the upper envelope 
of the parabolas does not necessarily 
correspond to the minimum of the lines. Hence, 
we need a ``special'' minimal-point query that 
computes the minimal point on the parabolas, 
using the upper envelope of the lines. The 
advantage of this transformation is that, 
by treating parabolas as lines, we may 
implement $U_j$ with a standard data structure 
for dynamic half-plane intersection or, 
dually, dynamic convex hull. The fastest such 
structure is due to Brodal and 
Jacob~\cite{brodal2002}, but it does not 
explicitly represent the upper envelope.
It is not clear if it can be modified to 
support our special minimal-point query.\footnote{Due
to the complexity of the data structure of Brodal and 
Jacob~\cite{brodal2002}, it seems to be a formidable
task to adapt it to our needs. However, there are
simpler, slightly suboptimal, data structures for dynamic
planar convex hulls that may be more amenable to 
modification~\cite{BrodalJa00,KaplanTaTi01}.
This would immediately lead to a better running time
for our algorithm.
}
Therefore, we use the slightly slower structure 
by Overmars and Van Leeuwen~\cite{overmars1981}, 
giving $O(\log^2 h)$ time insertions and 
deletions, for a structure containing $h$ lines 
(parabolas). Most importantly, we may compute 
the answer to the special minimal-point query 
in $O(\log h)$ time.

\begin{lemma}\label{lem:parabolaUE}
A minimal-point query on the upper envelope 
of $h$ lines can be implemented in $O(\log h)$ 
time.
\end{lemma}
\begin{proof}
The data structure by Overmars and Van Leeuwen 
maintains a \emph{concatenable queue} for the 
upper envelope. A concatenable queue is an 
abstract data type providing the operations 
\emph{insert}, \emph{delete}, \emph{concatenate} 
and \emph{split}. If the queue is implemented 
with a red-black tree, all these operations take 
$O(\log h)$ time. In addition to the tree, we 
maintain a doubly-linked list that stores the 
elements in sorted order, together with 
cross-pointers between the corresponding nodes 
in the tree and in the list. The list and the 
cross-pointers can be updated with constant 
overhead. Furthermore, the list enables us to 
perform predecessor and successor queries in 
$O(1)$ time, provided that a pointer to the 
appropriate node is available.

The order of the points on the convex hull 
corresponds directly to the order of the lines, 
and hence of the parabolas, on their respective 
upper envelopes. We use the red-black tree to 
perform a binary search for a minimal point on 
the upper envelope $\mathcal{U}$ of the parabolas.
We cannot decide how to proceed solely based on 
the parabola $p$ of a single node. However, using 
the predecessor and successor of $p$, we compute 
the local intersection pattern to guide the binary 
search. This is detailed below.

Let $p$ be a parabola on $\mathcal{U}$; let $l$ 
be the predecessor and $r$ the successor of $p$.
Let $p^*$, $l^*$, and $r^*$ denote their respective 
minima. For $z \in \R^2$, let $x(z)$ be the 
$x$-coordinate of $z$. The parabolas $p,l,r$ 
pairwise intersect exactly once. Let $p_l = p \cap l$ 
and $p_r = p \cap r$. As $l$ and $r$ are the 
neighbors of $p$ on $\mathcal{U}$, we have 
$x(p_l) \leq x(p_r)$; the part of $p$ on 
$\mathcal{U}$ is between $p_l$ and $p_r$.
We distinguish three cases (see Figure~\ref{fig:BScases}):
(i) $x(p^*) \leq x(p_l) \leq x(l^*)$; 
(ii) $x(p_l) \geq x(l^*), x(p^*)$; and 
(iii) $x(p_l) \leq x(p^*)$.
We cannot have $x(l^*) \leq x(p_l) \leq x(p^*)$: 
this would imply that $l$ is above $p$ right of 
$p_l$, although $l$ is the predecessor of $p$.

\begin{figure}[t]
  \centering
  \includegraphics{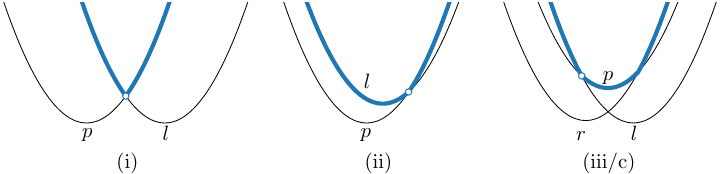}
  \caption{(i) The minimum is $p_l$; 
  (ii) $p_l$ excludes the possibility that the 
  minimum lies on or right of $p$; (iii/c) The 
  minimum cannot be left of $p$. If the analogous 
  case applies to $p_r$, the minimum of $p$ is 
  the minimum of $\mathcal{U}$.}
  \label{fig:BScases}
\end{figure}

In case (i), $l$ is decreasing and $p$ is increasing 
at $p_l$, so $p_l$ is the minimum of $\mathcal{U}$.
In case (ii), $p$ and the part of $\mathcal{U}$ 
right of $p$ do not contain the minimum, as $l$ is 
increasing to the left of $p_l$. Hence, we recurse 
on the left child of $p$.

In case (iii), the part of $\mathcal{U}$ left of 
$p_l$ is higher than $p$. We now consider the 
analogous cases for $p_r$: 
(a) $x(r^*) \leq x(p_r) \leq x(p^*)$; 
(b) $x(p_r) \leq x(p^*), x(r^*)$; and 
(c) $x(p_r) \geq x(p^*)$.
In case (a), $p_r$ is the minimum.
In case (b), we recurse on the right child of $p$.
In case (c), we get $x(p_l) \leq x(p^*) \leq x(p_r)$, 
so $p^*$ is the minimum of $\mathcal{U}$.

As we can access the predecessor and successor 
of a node and determine the intersection pattern 
in constant time, a minimal-point query takes 
$O(\log h)$ time.
\end{proof}

Though not necessary for our algorithm, we 
observe that we may actually obtain the leftmost 
minimal value (after including the constant 
functions) by performing another binary search.
We obtain the following theorem.

\begin{theorem}\label{thm:euclalgo}
Algorithm~\ref{alg:frechetbasic} computes the 
\f distance under the Euclidean distance in 
$\R^d$ in $O(m n (d + \log^2 mn))$ time.
\end{theorem}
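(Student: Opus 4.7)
The plan is to specialize Theorem~\ref{thm:basicalg} to $\delta = \delta_\text{E}^2$ and bound $f(n,d,\delta_\text{E}^2)$, the per-operation cost of the upper envelope structure $U_j$. Since Lemma~\ref{lem:sqrEucl} certifies that the functions $L[\cdot,j]$ (and, symmetrically, $B[i,\cdot]$) behave as pseudolines within a single row (column), the modified query strategy justified by Lemma~\ref{lem:unimodalpseudolines} applies, and it is sufficient to implement $U_j$ as a dynamic upper envelope of the full parabolas rather than of truncated unimodal pieces.

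First I would reduce the envelope of parabolas to a line envelope: by Lemma~\ref{lem:sqrEucl}, all parabolas in row $j$ share the common quadratic coefficient $|Q(j)Q(j+1)|^2$, so subtracting this fixed term turns them into lines whose upper envelope is dual to a dynamic convex hull. Plugging in the Overmars--Van Leeuwen structure~\cite{overmars1981} yields $O(\log^2 n)$ insertion and deletion. Inserting a parabola additionally requires computing its coefficients from $P(i)$ and the endpoints of $Q(j)Q(j+1)$, which is an $O(d)$ geometric computation (projection of a point onto a line in $\R^d$). For the minimum-point query, the concatenable queue is a red-black tree of the parabolas sorted along the envelope, augmented with a doubly-linked list that supplies $O(1)$ predecessor/successor access; the binary search described before the theorem inspects only the local intersection pattern of a parabola with its two envelope neighbors, and the case analysis (cases (i)--(iii) and (a)--(c)) shows that constant-size local information is always enough either to return the minimum or to decide the side of recursion. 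This gives $O(\log n)$ time per query, and the two constants $\acropt{L}[i,j]$ and $\acropt{B}[h,j]$ are folded in by one additional maximum.

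The main subtlety will be accounting for the modified query procedure required by Lemma~\ref{lem:unimodalpseudolines}. Whenever case (ii) fires, the query deletes one of the two parabolas that produced the minimum from $U_j$ and re-queries. A priori this could blow up a single query, but every parabola is inserted into $U_j$ at most once per row (Theorem~\ref{thm:basicalg} is invoked with the streamlined variant that skips line~\ref{line:trunc}), so it can also be deleted at most once; charging each such forced deletion to the insertion of the removed parabola redistributes the cost and keeps the amortized per-iteration bound at $O(d + \log^2 n)$. Analogous bookkeeping handles the columns.

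Combining these pieces gives $f(n,d,\delta_\text{E}^2) = O(d + \log^2 n)$, and Theorem~\ref{thm:basicalg} then yields a total running time of $O(n^2(d+\log^2 n))$, as claimed. The only nontrivial point to verify carefully is the interaction between Lemma~\ref{lem:unimodalpseudolines} and the internal binary search on the Overmars--Van Leeuwen tree, ensuring that after each forced removal the tree still represents the correct envelope and that the predecessor/successor pointers used by the query remain valid.
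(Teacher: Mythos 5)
Your proposal follows essentially the same route as the paper: switch to $\delta_{\text{E}}^2$, invoke Lemma~\ref{lem:sqrEucl} to get the pseudoline property, apply Lemma~\ref{lem:unimodalpseudolines} to drop line~\ref{line:trunc} and work with full parabolas, reduce them to lines by subtracting the shared quadratic term $|Q(j)Q(j+1)|^2\lambda^2$, implement $U_j$ via Overmars--Van Leeuwen for $O(\log^2 n)$ updates, use the concatenable queue plus doubly-linked list for an $O(\log n)$ binary-search minimum query, pay $O(d)$ per insertion to compute coefficients, and fold in the two constants by a final maximum. The one place you spell out more than the paper does explicitly is the amortization of the forced removals triggered by case~(ii) of Lemma~\ref{lem:unimodalpseudolines); the paper absorbs this into the generic insert-once/delete-once argument in the proof of Theorem~\ref{thm:basicalg}, but your charging scheme is the same thing made concrete. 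The argument is correct and complete.
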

\begin{proof}
Lemma~\ref{lem:sqrEucl} implies that we may 
use the modified algorithm 
(Theorem~\ref{thm:unimodalpseudolines}).
For each insertion, we have to compute the 
corresponding parabola, in $O(d)$ time.
The data structure by Overmars and Van 
Leeuwen \cite{overmars1981} allows us to 
implement the dynamic upper envelope of 
$h$ functions with $O(\log^2 h)$-time 
insertions and deletions. The special 
minimal-point query (Lemma~\ref{lem:parabolaUE}) 
takes only $O(\log h)$ time. Hence, 
$T_\text{ue}(h,d,\delta) = O(d + \log^2 h)$ and 
Theorem~\ref{thm:basicalg} implies a total 
execution time of $O(m n (d + \log^2 m + \log^2 n))$.
Since $\log^2 m + \log^2 n = \log^2 (m n) - 2 \log m \log n$, 
the execution time can be simplified to 
$O(m n (d + \log^2 (m n)))$.
\end{proof}

Theorem~\ref{thm:euclalgo} gives a slightly 
slower bound than known results for the 
Euclidean metric. However, we think that 
our framework has potential for a faster 
algorithm (see Section~\ref{sec:conclusion}).

\section{Conclusions and open problems}
\label{sec:conclusion}

We introduced a new method to compute the 
\f distance. It avoids using a decision 
algorithm and its consequence: a search 
on critical values. There is no need for 
parametric search.  For polyhedral 
distance functions we gave an $O(m n)$-time 
algorithm. The implementation of this 
algorithm borders the trivial: the most 
advanced data structure is a doubly linked list.
In addition, it can be used to compute a 
$(1+\eps)$-approximation of the Euclidean
\f distance in 
$O(m n / \sqrt{\eps})$ time or even in 
$O(m n \log \eps^{-1})$ time, if we are 
willing to use a decision algorithm. 
For the exact Euclidean case, we obtain a slightly 
slower running time of
$O\big(mn (\log^2 m + \log^2 n)\big)$. This requires 
dynamic convex hulls and does not really 
improve ease of implementation.
Below, 
we propose two open problems for further 
research. For simplicity, we assume here 
that the two curves have the same 
complexity, that is, $m = n$.

\paragraph{Faster Euclidean distance.}
We think that our current method has room for 
improvement; we conjecture that it is possible 
to extend on these ideas to obtain an $O(n^2)$ 
algorithm for the Euclidean case, at least for 
curves in the plane. Currently we use the full 
power of dynamic upper envelopes, which does not 
seem necessary since all the information about 
the distance terrain functions is available 
in advance.

For points in the plane, we can determine the 
order in which the parabolas occur on the upper 
envelopes, in $O(n^2)$ time for all boundaries.
From the proof of Lemma~\ref{lem:sqrEucl}, we 
know that the order is given by the projection 
of the vertices onto the line. We compute the 
arrangement of the lines dual to the vertices 
of a curve in $O(n^2)$ time. We then determine 
the order of the projected points by traversing 
the zone of a vertical line. This takes $O(n)$ 
time for one row or column. Unfortunately, this 
alone is insufficient to obtain the quadratic time 
bound.

\paragraph{Locally correct \f matchings.}
A \emph{\f matching} is a homeomorphism 
$\psi \in \Psi$ such that it is a witness for 
the \f distance, i.e., 
$\max_{t \in [0,n]} \delta(P(t), Q(\psi(t)) = d_\text{F}(P,Q)$.
A \f matching that induces a \f matching for 
any two matched subcurves is called a \emph{locally 
correct} \f matching~\cite{BuchinBMS12}. It enforces 
a relatively ``tight'' matching, even if the 
distances are locally much smaller than the \f 
distance of the complete curves. The algorithm by 
Buchin~\etal~\cite{BuchinBMS12} incurs a linear 
overhead on the algorithm of Alt and Godau~\cite{AltGo95}, 
resulting in $O(n^3 \log n)$ running time.

The \emph{discrete} \f distance is characterized 
by measuring distances only at vertices. A locally 
correct discrete \f matching can be computed without 
asymptotic overhead by extending the dynamic program 
to compute the discrete \f distance \cite{BuchinBMS12}.
Our algorithm for the (continuous) \f distance is 
much closer in nature to this dynamic program than 
to the decision-and-search paradigm of previously 
known methods. Therefore, we conjecture that our 
framework is able to avoid the linear overhead in 
computing a locally correct \f matching. However, 
the information we currently propagate is insufficient:
a large distance early on may ``obscure'' the rest 
of the computations, making it hard to decide 
which path would be locally correct.

\bibliographystyle{abbrv}
\bibliography{frechet}

\end{document}